\documentclass[journal]{IEEEtran}
\usepackage{amsmath,amsfonts}
\usepackage{algorithmic}
\usepackage{algorithm}
\usepackage{array}
\usepackage[caption=false,font=normalsize,labelfont=sf,textfont=sf]{subfig}
\usepackage{textcomp}
\usepackage{stfloats}
\usepackage{url}
\usepackage{verbatim}
\usepackage{graphicx}
\usepackage{cite}

\usepackage[T1]{fontenc}

\usepackage{amssymb}

\newtheorem{lemma}{Lemma}

\begin{document}

\title{Identifiability in Blind Source Separation\\ through Stabilizer Shrinkage:\\
Unifying Non-Gaussianity and Observation Diversity}

%\author{IEEE Publication Technology,~\IEEEmembership{Staff,~IEEE,}
        % <-this % stops a space
%\thanks{This paper was produced by the IEEE Publication Technology Group. They are in Piscataway, NJ.}% <-this % stops a space
%\thanks{Manuscript received April 19, 2021; revised August 16, 2021.}}
\author{
Tomomi Ogawa ~%\IEEEmembership{Member,~IEEE}
and~Hiroki Matsumoto%
\thanks{
Tomomi Ogawa is with the Graduate School of Science and Engineering,
Tokyo Denki University, Saitama, Japan
(e-mail: to.ogawa@mail.dendai.ac.jp).
}%S
\thanks{
Hiroki Matsumoto is with the Graduate School of Engineering,
Maebashi Institute of Technology, Gumma, Japan
(e-mail: matsumoto@maebashi-it.ac.jp).
}%
}%
%
% The paper headers
%\markboth{Journal of \LaTeX\ Class Files,~Vol.~14, No.~8, August~2021}%
%{Shell \MakeLowercase{\textit{et al.}}: A Sample Article Using IEEEtran.cls for IEEE Journals}

%\IEEEpubid{0000--0000/00\$00.00~\copyright~2021 IEEE}
% Remember, if you use this you must call \IEEEpubidadjcol in the second
% column for its text to clear the IEEEpubid mark.

\maketitle

\begin{abstract}
Identifiability is a central issue in blind source separation (BSS), determining
whether latent sources can be uniquely recovered from observed mixtures.
Classical approaches address identifiability either by exploiting source
non-Gaussianity via higher-order statistics (HOS) or by enriching the observation
structure through temporal, spatial, or multi-channel diversity using
second-order statistics (SOS), and these routes are often regarded as
fundamentally different.
In this paper, we revisit identifiability in BSS from a structural perspective,
interpreting it as constraint-induced reduction of residual ambiguity in the
mixing model.
Within this framework, the observation mechanism is viewed broadly to include
both input-side statistical constraints and output-side observation structures.
HOS-based and SOS-based approaches are then unified as mechanisms of
\emph{stabilizer shrinkage}, in which observation-induced constraints reduce an
initially continuous ambiguity to a finite residual one.
To connect this structural viewpoint with finite-sample regimes, we introduce a
Jacobian-based sensitivity probe as a numerical diagnostic of local
identifiability.
Numerical experiments show that increasing non-Gaussianity or observation
diversity suppresses the same residual symmetry, revealing a structural
trade-off between source statistics and observation design.
These results provide a unified interpretation of classical BSS methods and
clarify how observation constraints govern identifiability.
\end{abstract}

\begin{IEEEkeywords}
Blind source separation,
identifiability,
stabilizer,
higher-order statistics,
second-order statistics,
observation diversity.
\end{IEEEkeywords}

% ===== Main sections =====
\section{Introduction}
\label{sec:introduction}
Blind source separation (BSS) aims to recover latent source
signals from observed mixtures without prior knowledge of
the mixing process, and identifiability lies at the core of
this task~\cite{kofidis_blind_2016}.
Even when separation algorithms converge and standard performance indices appear
satisfactory, residual ambiguities may remain~\cite{tong_indeterminacy_1991}.
Such ambiguities reflect intrinsic limitations of the observation mechanism
rather than algorithmic failure.
Clarifying when and why they can or cannot be eliminated is therefore essential
for interpreting classical BSS methods and for guiding the design of new ones.

Classical studies of identifiability in BSS have largely followed two
complementary routes~\cite{comon_independent_1994,hyvarinen_independent_2000,belouchrani_blind_1997,tong_amuse_1990}.
Higher-order statistics (HOS)--based methods exploit deviations from Gaussianity to
impose distributional constraints that are not accessible through second-order
analysis~\cite{comon2010handbook}.
In contrast, second-order statistics (SOS)--based methods enrich the observation
structure via temporal, spatial, or multi-channel diversity, thereby imposing
multiple correlation-based constraints~\cite{tong_indeterminacy_1991,aissa-el-bey_general_2008}.
Throughout this paper, we use the term \emph{multi-SOS} to refer to second-order
statistics constructed from multiple lags or blocks (e.g., delayed covariance
matrices in SOBI-type methods), in contrast to single-lag second-order statistics.
These two routes are commonly regarded as fundamentally different, resulting in
distinct algorithmic families and design heuristics.
Consequently, identifiability has typically been discussed either from the
standpoint of source statistics or from that of observation diversity, with
limited conceptual integration between the two.

Recent data-driven and deep learning--based approaches further complicate this
landscape~\cite{kivva_identifiability_2022,helwani_sound_2024}.
Although such methods often achieve strong empirical performance,
they do not directly address the structural question of identifiability:
to what extent does the observation mechanism itself resolve ambiguity,
independently of model capacity or training data?
This question motivates a structural treatment of identifiability based on
how observations constrain admissible transformations of the mixing model~\cite{mika_lie_2020}.

From this perspective, identifiability is not viewed as an algorithmic outcome
but as a property of the observation mechanism itself.
Specifically, it depends on whether observation-induced constraints reduce an
initially continuous ambiguity to a finite residual one.
In the sequel, this idea is formalized by introducing an observation-induced
equivalence relation in Section~\ref{sec:problem-formulation}, characterizing the
associated stabilizer of admissible transformations in
Section~\ref{sec:structural-viewpoint}, and examining its local manifestation
through Jacobian sensitivity in Section~\ref{sec:local-identifiability}.

Within this framework, observation statistics are interpreted as invariance
constraints on the mixing matrix, and residual ambiguities correspond to
transformations that preserve all enforced statistics.
Identifiability is therefore characterized by the extent to which these
constraints restrict the admissible transformation group.
In particular, HOS-based approaches impose constraints primarily on the
\emph{input side} through assumptions on source statistics such as
non-Gaussianity and independence,
whereas multi-SOS-based approaches impose constraints on the \emph{output side}
by enriching the observation structure through temporal, spatial, or
multi-channel diversity in second-order statistics.
Both classes of methods can thus be interpreted as mechanisms that reduce
residual ambiguity by shrinking the stabilizer of the mixing model.
This structural reduction, termed \emph{stabilizer shrinkage} and formalized in
Section~\ref{subsec:constraint-accumulation}, provides a unified viewpoint in
which distributional information and observation diversity act as alternative
sources of constraint richness for ambiguity reduction.

An immediate implication of this viewpoint is that observation mechanisms
producing identical stabilizer shrinkage necessarily yield identical local
identifiability properties, regardless of whether the constraints arise from
HOS or from multi-SOS statistics.
Differences between algorithms therefore reflect how stabilizer-reducing
constraints are instantiated rather than fundamentally distinct identifiability
principles.

To connect this structural analysis to finite-sample regimes, we introduce a
Jacobian-based sensitivity probe as a numerical diagnostic of local
identifiability.
The probe is used strictly as a measurement tool, not as a new identifiability
criterion, to assess how strongly observation constraints suppress weakly
observable perturbation directions.
Because the Jacobian captures the first-order response of the observation map,
its nullspace reflects residual stabilizer directions, enabling diagnostic
comparison of different constraint configurations even when global
identifiability conditions are formally satisfied.

Controlled numerical experiments illustrate that (i) HOS constraints strengthen
identifiability as non-Gaussianity increases, (ii) multi-lag SOS constraints
recover identifiability through observation diversity even under Gaussian
sources, and (iii) the trade-off between these mechanisms can be visualized
within a common structural framework.

The main contributions of this paper are summarized as follows:
\begin{itemize}
\item A structural interpretation of identifiability in BSS based on
constraint-induced ambiguity reduction, independent of specific separation
algorithms.
\item A unified framework of stabilizer shrinkage that relates HOS-based and
SOS-based approaches and clarifies their complementary roles.
\item A Jacobian-based sensitivity probe as a practical diagnostic tool for
visualizing local identifiability under finite-sample conditions.
\item Numerical experiments illustrating how non-Gaussianity and observation
diversity contribute to ambiguity reduction within a common structural probe.
\end{itemize}

The remainder of the paper is organized as follows.
Section~\ref{sec:problem-formulation} introduces the problem formulation and inherent ambiguities.
Section~\ref{sec:background} reviews classical HOS-based and SOS-based approaches from a
constraint-oriented perspective.
Section~\ref{sec:structural-viewpoint} develops the structural viewpoint on observation-induced constraints
and symmetry.
Section~\ref{sec:local-identifiability} formalizes local identifiability via Jacobian sensitivity.
Section~\ref{sec:tradeoff} examines the relationship between non-Gaussianity and observation diversity,
and Section~\ref{sec:experiments} presents numerical experiments validating the proposed
interpretation.
Section~\ref{sec:conclusion} concludes the paper and discusses implications for the design
and diagnosis of observation constraints in blind source separation.
Appendix~\ref{app:jacobian-probe} provides the formal definition of the Jacobian-based probe, while
Appendix~\ref{app:asymmetry} summarizes representative counterexamples in which observation-induced
constraints fail to reduce ambiguity.
\section{Problem Formulation and Notation}
\label{sec:problem-formulation}
%--------------------
\subsection{Blind Source Separation Model}
\label{subsec:bss-model}

We begin by fixing the basic blind source separation (BSS) model and the
associated equivalence class, which serve as the reference throughout the
paper.

We consider the standard linear instantaneous BSS model~\cite{kofidis_blind_2016,pedersen_survey_2007}
\begin{equation}
x(t) = H s(t),
\label{eq:bss-model}
\end{equation}
where $s(t)\in\mathbb{R}^n$ denotes statistically independent source signals and
$H\in\mathbb{R}^{m\times n}$ is an unknown mixing matrix.
The objective of BSS is to recover the source signals, or an equivalent
representation thereof, solely from the observed mixtures $x(t)$, without prior
knowledge of $H$.

Throughout this paper, we focus on square or overdetermined mixtures
($m\geq n$), and consider both real-valued and complex-valued formulations, as
they naturally arise in applications such as audio signal processing and
communication systems~\cite{congedo_blind_2008}.
No specific algorithmic structure is assumed at this stage; the analysis is
intended to be independent of particular separation methods~\cite{comon2010handbook}.

When second-order statistics are involved, we assume that the observations are
whitened using the sample covariance~\cite{comon_independent_1994,hyvarinen_independent_2000}.
Here, for a zero-mean vector process $v(t)$, we define
$\mathrm{Cov}(v):=\mathbb{E}[\,v(t)v(t)^\top\,]$
(or $\mathbb{E}[\,v(t)v(t)^{\ast}\,]$ in the complex-valued case),
where $\mathbb{E}[\cdot]$ denotes statistical expectation, $(\cdot)^\top$ the
transpose, and $(\cdot)^{\ast}$ the conjugate transpose.

Under the linear model \eqref{eq:bss-model}, the covariance of the observations
satisfies
\begin{equation}
\mathrm{Cov}(x) = H\,\mathrm{Cov}(s)\,H^\top .
\label{eq:covariance-relation}
\end{equation}

In practice, whitening is achieved by a linear transformation $W$ such that
\begin{equation}
\mathrm{Cov}(W x) = I ,
\label{eq:whitening}
\end{equation}
where $I$ denotes the identity matrix of appropriate dimension.
Under \eqref{eq:bss-model}, this implies that the effective mixing matrix $W H$
satisfies $(W H)(W H)^\top = I$ in the real-valued case (or
$(W H)(W H)^{\ast} = I$ in the complex-valued case), and can therefore be regarded
as orthogonal (or unitary).
In the following, we adopt the common convention of absorbing the whitening
transform into the observation, and reuse the symbol $x$ to denote the whitened
signal.

Accordingly, without loss of generality, the mixing matrix can be considered up
to a residual right-orthogonal (or unitary) transformation.
Specifically, one may assume that after whitening,
\begin{equation}
\mathrm{Cov}(x)=I
\;\Longrightarrow\;
H \sim H Q ,
\label{eq:whitened-equivalence}
\end{equation}
where the symbol $\sim$ denotes an equivalence relation on the set of mixing
matrices, identifying representations that differ only by a residual
right-orthogonal (or unitary) transformation.
The residual transformation $Q$ satisfies
\begin{equation}
Q \in
\begin{cases}
\mathcal{O}(n), & \text{real-valued case},\\
\mathcal{U}(n), & \text{complex-valued case},
\end{cases}
\label{eq:orthogonal-unitary}
\end{equation}
reflecting the fact that whitening fixes the zero-lag covariance up to an
orthogonal or unitary change of basis.
Here, $\mathcal{O}(n)$ and $\mathcal{U}(n)$ denote the orthogonal and unitary
groups of dimension $n$, respectively, with dimensions consistent with the
whitened model.
The symbol $I$ should not be confused with the index set $\mathcal{I}$ used later
to label observation constraints.

This preprocessing does not alter identifiability in the sense considered here.
Rather, it isolates the residual ambiguities to orthogonal (or unitary)
transformations, a property that plays a central role in the subsequent analysis.

It is well known that, even under ideal conditions, the BSS problem does not
admit a unique solution.
Instead, recovered sources are determined only up to inherent ambiguities.
In particular, demixing solutions related by permutation and component-wise
scaling (or phase rotations in the complex-valued case) are indistinguishable
from the observations~\cite{comon_independent_1994}.
Accordingly, throughout this paper we regard two mixing matrices as equivalent if
\begin{equation}
H \;\sim\; H P D,
\label{eq:bss-equivalence}
\end{equation}
where $P$ is a permutation matrix and $D$ is a nonsingular diagonal matrix
(representing scaling or phase factors).
This equivalence relation fixes the notion of identifiability considered in the
sequel.
%--------------------
\subsection{Ambiguities and Equivalence up to Inherent Transformations}
\label{subsec:ambiguity-equivalence}

A fundamental characteristic of blind source separation is that its solutions
are not unique, even in the absence of noise and estimation errors.
This non-uniqueness does not arise from algorithmic imperfections, but is
intrinsic to the BSS model itself~\cite{benveniste_robust_1980,tong_indeterminacy_1991}.
Specifically, certain transformations applied to the recovered sources produce
alternative solutions that remain consistent with the observed mixtures.

In practice, the inherent ambiguities described by
\eqref{eq:bss-equivalence} include permutations of the source order and
component-wise scaling, or phase rotations in the complex-valued case.
Such transformations preserve the validity of the separation in the sense that
they lead to reconstructed sources compatible with the observed data, despite
differences in parameter representation.
Throughout this paper, equivalence is understood strictly in this BSS-specific
sense.
At this stage, no additional algebraic or group-theoretic structure is assumed;
the purpose of this subsection is solely to clarify which ambiguities are
considered inherent to the model.

This perspective is essential for interpreting identifiability in blind source
separation.
Rather than asking whether a unique parameter estimate exists, the relevant
question is which ambiguities can or cannot be resolved by the available
information~\cite{bellman_structural_1970}.
Identifiability is therefore determined by the extent to which observation-derived
constraints restrict the equivalence class \eqref{eq:bss-equivalence}
~\cite{tong_indeterminacy_1991}.

In the following subsection, this viewpoint is formalized by examining how
different classes of observation statistics act as constraints that progressively
limit the set of admissible transformations.

\subsection{Observation Statistics as Constraint-Inducing Information}
\label{subsec:obs-constraints}

Throughout this paper, identifiability is considered up to the \emph{finite}
inherent ambiguities described in Section~\ref{subsec:ambiguity-equivalence}
(permutation and component-wise scaling, or phase rotations in the
complex-valued case).
After whitening (Section~\ref{subsec:bss-model}), these finite ambiguities are
preceded by a \emph{continuous} right-orthogonal (or unitary) freedom, reflecting
the fact that second-order statistics alone cannot distinguish right-unitary
reparameterizations of the mixing matrix.
The general linear group $\mathrm{GL}(n)$ is introduced here as a convenient
parametrization of admissible reparameterizations of the mixing matrix
\emph{before} observation-induced constraints are imposed; after whitening,
the relevant residual transformations reduce locally to right-orthogonal
(or unitary) actions as discussed in Section~\ref{subsec:bss-model}.

In blind source separation, the only information available for resolving these
ambiguities arises from statistical properties of the observed signals.
We therefore interpret observation statistics structurally, as constraints
imposed by the observation mechanism itself, which restrict admissible
transformations of the mixing matrix while remaining consistent with the
observed data.

To formalize this viewpoint, we introduce an observation map
\begin{equation}
\Phi:\; H \;\longmapsto\; \bigl\{ A_i(H) \bigr\}_{i \in \mathcal{I}},
\label{eq:obs-map}
\end{equation}
where each $A_i(H)$ is a matrix-valued quantity induced by a chosen class of
observation statistics.
The index set $\mathcal{I}$ specifies which statistical constraints are imposed
simultaneously.
Typical examples include higher-order cumulant matrices in HOS-based approaches
or covariance-based operators in SOS-based approaches~\cite{cardoso_blind_1993,belouchrani_blind_1997}.
Each operator $A_i(H)$ is defined by applying a fixed, predetermined
construction rule (e.g., a population cumulant or covariance operator) to the
model $x(t)=Hs(t)$; the same construction rule is applied when evaluating
$A_i(HG)$ for any admissible transformation $G$.
For example, in the HOS case $A_i(H)$ may correspond to population fourth-order
cumulant matrices, whereas in the SOS case it may represent lagged covariance
operators evaluated at different lags.

For clarity, the term \emph{multi--SOS} is used to collectively refer to
second-order constraint families involving multiple operators, including but
not limited to lagged covariance matrices, multi-sensor or multi-channel
observations, block or oversampled covariance structures, and covariance
operators obtained under multiple experimental conditions~\cite{belouchrani_blind_1997,moulines_subspace_1995,tong_blind_1994}.
In this usage, specific instances such as \emph{multi-lag SOS} correspond to
concrete realizations based on temporal lag diversity.
In contrast, the labels \emph{HOS-based} and \emph{SOS-based} are used when
emphasizing the type of statistics being exploited, whereas the term
\emph{multi--SOS} highlights the structural level at which multiple second-order
operators jointly impose constraints.

The collection $\{A_i(H)\}_{i \in \mathcal{I}}$ represents invariants of the
observation: any admissible transformation of $H$ must preserve all of these
quantities.
Specifically, a transformation $G \in \mathrm{GL}(n)$ is said to be compatible
with the observation if
\begin{equation}
A_i(HG) = A_i(H),
\qquad \forall\, i \in \mathcal{I}.
\label{eq:constraint-invariance}
\end{equation}

This invariance condition can be equivalently rewritten as
\begin{equation}
\begin{aligned}
A_i(HG)=A_i(H)\ \forall i
\quad &\Longleftrightarrow \quad
\Phi(HG)=\Phi(H),
\end{aligned}
\label{eq:obs-kernel}
\end{equation}
which makes explicit that admissible transformations are those that leave the
observation map unchanged.

Accordingly, we define the residual ambiguity set induced by the observation map
$\Phi$ as
\begin{equation}
\mathrm{Stab}(\mathcal{I})
\;\triangleq\;
\bigl\{\, G \in \mathrm{GL}(n)
\;\big|\;
A_i(HG) = A_i(H),\ \forall\, i \in \mathcal{I}
\bigr\}.
\label{eq:global-ambiguity}
\end{equation}

The terminology ``stabilizer'' is used here in the standard sense of group
actions, referring to the subgroup of transformations that leave a given object
invariant (see, e.g.,~\cite{Herstein1996AbstractAlgebra}).

The set $\mathrm{Stab}(\mathcal{I})$ captures the remaining symmetries of the
mixing matrix that cannot be resolved from the available statistics.
In practice, different choices of observation statistics lead to different
\emph{degrees of stabilizer reduction}: while whitening alone leaves a
continuous right-unitary stabilizer, appropriately chosen and sufficiently rich
observation constraints can, under well-understood conditions, reduce this
continuous ambiguity to a \emph{finite} residual group.
In this sense, identifiability is governed by how strongly the observation map
$\Phi$ restricts admissible transformations of $H$, rather than by the numerical
procedure used to estimate it.

\paragraph*{Remark on identifiability}
Identifiability in blind source separation does not arise from output
non-Gaussianity alone, nor from structural richness of the input process when
such structure is not fixed by the observation operators.
Instead, identifiability emerges only when source-side constraints
(e.g., non-Gaussianity combined with independence) or output-side operator
diversity effectively reduce the continuous right-unitary freedom to a finite
residual ambiguity.
Formal counterexamples illustrating these asymmetries are summarized in
Appendix~\ref{app:asymmetry}.
%-----------------------------------------------

\subsection{Scope and Assumptions}
\label{subsec:scope}

The scope of this paper is deliberately restricted in order to focus on the
structural aspects of identifiability in blind source separation.
Specifically, we consider linear instantaneous mixture models with statistically
independent sources~\cite{naanaa_geometric_2012}, as introduced in
Section~\ref{subsec:bss-model}.
This setting encompasses a broad class of classical BSS problems while allowing
the effects of different observation constraints to be examined without
confounding factors arising from model complexity.

The analysis primarily addresses separation methods based on second-order and
higher-order observation statistics, including those exploiting temporal,
spatial, or multi-observation diversity.
No assumptions are made on a particular algorithmic implementation, and the
discussion is not tied to specific optimization procedures.
Instead, the emphasis is placed on how different classes of statistical
information constrain the underlying model and influence the resolution of
inherent ambiguities.

Several extensions of the basic BSS model are intentionally not considered~\cite{kivva_identifiability_2022,helwani_sound_2024,gong_double_2018}.
These include nonlinear or convolutive mixtures, deep learning--based separation
frameworks, and problem formulations that rely on supervised or semi-supervised
training.
While these settings are of practical interest, they introduce additional
sources of ambiguity and inductive bias that obscure the structural effects
studied here.
By limiting attention to the classical linear BSS framework, the analysis aims to
isolate and clarify the role of observation-induced constraints in shaping
identifiability.
\section{Background: HOS-Based and SOS-Based Blind Source Separation}
\label{sec:background}

Blind source separation has been studied extensively under a variety of
assumptions and problem settings.
Among classical approaches, methods based on higher-order statistics (HOS)
and those based on second-order statistics (SOS) have played a central role
and are often treated as distinct methodological families.
This section briefly reviews these two classes, focusing on their roles as
\emph{constraint-inducing mechanisms} rather than on algorithmic details.

%--------------------------------
\subsection{HOS-Based Approaches}
HOS-based approaches exploit the non-Gaussianity of source signals by
introducing constraints derived from higher-order moments or cumulants,
which are invisible to second-order analysis~\cite{
comon_independent_1994,
hyvarinen_independent_2000,
cardoso_blind_1993,
amari_new_1995,
comon2010handbook,
noauthor_handbook_nodate,
shalvi_new_1990,
cardoso_high-order_1999}.

From a structural viewpoint, HOS-based methods impose constraints of the form
\begin{equation}
A_i(HG) \;=\; A_i(H),
\qquad i \in \mathcal{I}_{\mathrm{HOS}},
\label{eq:hos-invariance}
\end{equation}
where $A_i(H)$ denotes a matrix induced by higher-order statistics
(e.g., cumulant matrices),
$G \in \mathrm{GL}(N)$ represents an admissible transformation of the mixing
matrix, and $\mathcal{I}_{\mathrm{HOS}}$ indexes the chosen set of HOS constraints.
Equation~\eqref{eq:hos-invariance} expresses the fact that only transformations
preserving higher-order statistical structure remain compatible with the
observation.

A key advantage of HOS-based methods is their ability to eliminate ambiguities
that persist under purely second-order analysis, particularly when source
distributions deviate sufficiently from Gaussianity.
At the same time, the effectiveness of these constraints depends on the
reliability of higher-order statistical estimation, which can be sensitive to
finite-sample effects and noise.

%--------------------------------
\subsection{SOS-Based Approaches}
In contrast, SOS-based approaches rely on structural diversity in the
observations, without assuming specific source distributions~\cite{
belouchrani_blind_1997,
tong_amuse_1990,
tong_indeterminacy_1991,
moulines_subspace_1995,
aissa-el-bey_general_2008,
tong_new_1991,
tong_blind_1994}.
By exploiting multiple time lags, sensor arrays, or observation blocks,
they introduce collections of second-order constraints.

A typical SOS formulation considers a collection of covariance matrices
\begin{equation}
R_z(\tau) \;=\; \mathbb{E}\!\left[ z(t)\,z(t-\tau)^{\top} \right],
\qquad \tau \in \mathcal{I}_{\mathrm{SOS}},
\label{eq:sos-covariances}
\end{equation}
where $z(t)$ denotes the observed (often whitened) signal,
and $\mathcal{I}_{\mathrm{SOS}}$ specifies a set of nonzero lags or observation
conditions.
Each lag $\tau$ induces a second-order constraint, and separation is achieved
by enforcing joint compatibility across all selected lags.

Structurally, SOS-based methods restrict admissible transformations by
requiring simultaneous consistency with multiple second-order relations.
While any single covariance constraint may leave substantial ambiguity,
their combination can significantly reduce the set of transformations that
preserve all imposed correlations.
Such methods are often viewed as statistically robust and computationally
efficient, especially in scenarios where non-Gaussianity is weak or unreliable
but sufficient observation diversity is available.

%--------------------------------
\subsection{Structural Comparison}
Although HOS-based and SOS-based methods are often presented as alternative
strategies, both act by introducing additional invariance constraints that
restrict admissible transformations of the mixing model.
The distinction lies in the \emph{source} of constraint information:
distributional properties in the HOS case and observation diversity in the
SOS case.

From the unified viewpoint developed in Section~\ref{subsec:obs-constraints},
both families can be interpreted as imposing collections of invariance
constraints of the form~\eqref{eq:hos-invariance} or
\eqref{eq:sos-covariances}, which progressively restrict the set of admissible
transformations compatible with the observed data.
Identifiability is therefore governed by how strongly these constraints
reduce the stabilizer, rather than by the statistical order or algorithmic
realization.

Counterexamples in which individual constraint families fail to reduce the
stabilizer are summarized in Appendix~\ref{app:asymmetry}.

\section{Structural Viewpoint: Observation Constraints and Symmetry}
\label{sec:structural-viewpoint}
%--------------------------------------------
\subsection{Observation-Induced Constraints on the Mixing Model}
\label{subsec:obs_constraints_structural}

The statistical information extracted from observed mixtures imposes structural
constraints on the admissible transformations of the mixing model: only those
transformations that preserve the observed statistics remain compatible with the
data, while others lead to inconsistencies.

Here, $\Phi(H)=\{A_i(H)\}_{i\in\mathcal I}$ refers to the observation map
introduced in Section~\ref{subsec:obs-constraints}.
As defined in~\eqref{eq:global-ambiguity}, the set of transformations that leave
all constraint objects invariant forms the stabilizer
$\mathrm{Stab}(\mathcal I)$~\cite{Herstein1996AbstractAlgebra}.
This stabilizer represents the residual ambiguity of the mixing model under the
imposed observation constraints.

When only a single class of observation statistics is enforced, the corresponding
index set $\mathcal I$ is typically limited, and the stabilizer
$\mathrm{Stab}(\mathcal I)$ remains nontrivial.
Such constraints eliminate only those transformations that violate the specific
statistical properties being enforced, leaving a family of admissible
transformations that reflect intrinsic ambiguities of the separation model.

Crucially, the stabilizer $\mathrm{Stab}(\mathcal I)$ is a \emph{global} object:
it characterizes residual ambiguities at the level of exact invariance under the
observation map.
While this global characterization is conceptually clear, it does not directly
indicate how ambiguity reduction manifests under finite samples or small
perturbations.
This observation motivates the need for a complementary \emph{local}
characterization of stabilizer reduction, which is developed in the subsequent
sections via Jacobian-based sensitivity analysis.

%--------------------------------------------
\subsection{Residual Ambiguities under Multiple Constraints}
\label{subsec:residual_ambiguities}

Even when observation statistics impose nontrivial constraints on the mixing
model, residual ambiguities generally remain if only a single class of
constraints is enforced.
Such ambiguities correspond to nontrivial elements of the stabilizer
$\mathrm{Stab}(\mathcal I)$ associated with the chosen observation operators,
and cannot be ruled out based on that information alone.
These residual symmetries are not accidental; rather, they reflect intrinsic
structural properties of the separation model under limited observational
constraints.

When multiple classes of observation statistics are considered simultaneously,
each class imposes a distinct invariance on the mixing model and therefore
induces its own stabilizer.
Let $\mathcal I_1$ and $\mathcal I_2$ denote index sets associated with two
different families of observation operators.
A transformation is compatible with both constraints if and only if it preserves
both invariances.
Accordingly, the admissible transformations under multiple constraints
are characterized by the following elementary but structurally important relation.

\begin{lemma}[Intersection of stabilizers under multiple constraints]
\label{lem:stabilizer_intersection}
Let $\mathcal I_1$ and $\mathcal I_2$ be index sets corresponding to two different families
of observation operators.
Then the stabilizer associated with their joint constraint satisfies
\begin{equation}
\mathrm{Stab}(\mathcal I_1 \cup \mathcal I_2)
=
\mathrm{Stab}(\mathcal I_1) \cap \mathrm{Stab}(\mathcal I_2).
\label{eq:stabilizer-intersection}
\end{equation}
\end{lemma}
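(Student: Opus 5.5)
The plan is to prove \eqref{eq:stabilizer-intersection} by double inclusion, working directly from the definition \eqref{eq:global-ambiguity}. The key structural fact is that membership of $G$ in $\mathrm{Stab}(\mathcal I)$ is a conjunction over $i\in\mathcal I$ of the pointwise conditions $A_i(HG)=A_i(H)$. Each condition depends only on the single operator $A_i$, the fixed $H$, and $G$. This is guaranteed by the paper's convention in Section~\ref{subsec:obs-constraints} that every $A_i$ uses one fixed construction rule, regardless of which index set $i$ is drawn from. So the condition attached to an index is the same whether that index is viewed as belonging to $\mathcal I_1$, to $\mathcal I_2$, or to $\mathcal I_1\cup\mathcal I_2$. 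The lemma then reduces to the logical identity $\forall i\in\mathcal I_1\cup\mathcal I_2\colon P(i) \iff (\forall i\in\mathcal I_1\colon P(i))\wedge(\forall i\in\mathcal I_2\colon P(i))$.

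\emph{Inclusion $\subseteq$.} Let $G\in\mathrm{Stab}(\mathcal I_1\cup\mathcal I_2)$. Then $G\in\mathrm{GL}(n)$ and $A_i(HG)=A_i(H)$ for all $i\in\mathcal I_1\cup\mathcal I_2$. In particular this holds for all $i\in\mathcal I_1$, so $G\in\mathrm{Stab}(\mathcal I_1)$. Likewise it holds for all $i\in\mathcal I_2$, so $G\in\mathrm{Stab}(\mathcal I_2)$.

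\emph{Inclusion $\supseteq$.} Let $G$ lie in both stabilizers. Any $i\in\mathcal I_1\cup\mathcal I_2$ belongs to $\mathcal I_1$ or to $\mathcal I_2$, and in either case $A_i(HG)=A_i(H)$. Hence $G\in\mathrm{Stab}(\mathcal I_1\cup\mathcal I_2)$. Equivalently, via \eqref{eq:obs-kernel}, $\Phi_{\mathcal I_1\cup\mathcal I_2}(HG)=\Phi_{\mathcal I_1\cup\mathcal I_2}(H)$ holds if and only if both restricted maps agree, because the observation map on a union is just the pair of the two collections.

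The only point that needs care is bookkeeping rather than mathematics. If $\mathcal I_1$ and $\mathcal I_2$ index ``different families'' and happen to share labels (for example, an integer $i$ referring to a cumulant matrix in one family and a lag in the other), the union must be read as a disjoint, tagged union. Otherwise a shared label could refer to two distinct operators. I would state this convention explicitly and note that with it the argument above goes through unchanged.

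I would also remark that each $\mathrm{Stab}(\mathcal I)$ is a subgroup of $\mathrm{GL}(n)$ when $\Phi$ is read as an invariance under the right action $H\mapsto HG$ with $H$ varying over the orbit. The intersection is then again a subgroup, which justifies calling the result a stabilizer. However, this remark is not needed for the set equality itself.
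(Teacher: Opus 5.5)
Your proof is correct and follows the same route as the paper. Both arguments unfold the definition \eqref{eq:global-ambiguity}: membership in $\mathrm{Stab}(\mathcal I_1\cup\mathcal I_2)$ is the conjunction of the invariance conditions over both index sets, which is exactly membership in both stabilizers, and your explicit double inclusion and tagged-union remark only make that definitional argument more careful.
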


\begin{IEEEproof}
By definition of the stabilizer given in~\eqref{eq:global-ambiguity},
a transformation belongs to $\mathrm{Stab}(\mathcal I)$ if and only if it
preserves every observation operator indexed by $\mathcal I$.
Applying this definition to the union $\mathcal I_1 \cup \mathcal I_2$,
a transformation belongs to $\mathrm{Stab}(\mathcal I_1 \cup \mathcal I_2)$
precisely when it preserves all operators in both families.
This condition is equivalent to requiring membership in both
$\mathrm{Stab}(\mathcal I_1)$ and $\mathrm{Stab}(\mathcal I_2)$,
which establishes~\eqref{eq:stabilizer-intersection}.
\end{IEEEproof}

The key point is that the benefit of combining multiple constraints does not
depend on their specific algorithmic realization, but on how their induced
stabilizers intersect at a structural level.
Constraints derived from different sources of statistical information tend to
eliminate complementary classes of symmetries, even when each constraint alone
is insufficient for identifiability.

At the same time, the intersection relation in
Lemma~\ref{lem:stabilizer_intersection} remains a global statement.
It specifies which transformations are ultimately excluded, but not how such
exclusion manifests under finite perturbations or finite data.
This observation motivates the introduction of a local perspective that can
quantify stabilizer reduction in practice.

%------------------------
\subsection{Accumulation of Constraints and Structural Ambiguity Reduction}
\label{subsec:constraint-accumulation}

When multiple observation-induced constraints are enforced simultaneously,
their effects accumulate through successive restrictions of the stabilizer.
As additional constraint families are incorporated, the index set
$\mathcal I$ grows, and the stabilizer $\mathrm{Stab}(\mathcal{I})$ defined
in~\eqref{eq:global-ambiguity} is progressively reduced through repeated
intersections of the form~\eqref{eq:stabilizer-intersection}.
We refer to this progressive reduction of admissible transformations as
\emph{stabilizer shrinkage}.

From a global standpoint, stabilizer shrinkage reflects a systematic
simplification of the symmetry inherent in the mixing model under limited
observation.
Each newly imposed class of observation statistics excludes transformations
that were previously admissible, thereby reducing the degrees of freedom
associated with residual ambiguity.
This reduction is induced by the observation mechanism itself and should not be
confused with numerical regularization or estimator-dependent effects.

In practical settings, different observation statistics restrict different
subsets of the original continuous right-unitary freedom.
As a consequence, distinct constraint families may lead to comparable
ambiguity reductions even when their statistical origins differ.
This explains why HOS-based and multi-SOS-based approaches, despite relying on
different forms of statistical information, can exhibit structurally similar
identifiability behavior.

To connect this global notion of ambiguity reduction to a local and tractable
description, we explicitly exploit the fact that $\mathrm{GL}(n)$ is a Lie group
and that $\mathrm{Stab}(\mathcal I)$ forms a closed Lie subgroup, whose local
structure is characterized by its associated Lie algebra
(cf., e.g.,~\cite{Hall2015LieGroups}).
As observation constraints accumulate, stabilizer shrinkage manifests not only
as a set-theoretic reduction of admissible transformations, but also as a
reduction in the dimension of the tangent space at the identity.
From a dimensional perspective, this progressive reduction can be schematically
expressed as
\begin{equation}
\dim T_I \mathrm{Stab}(\mathcal I)
\;\downarrow\;
\text{as constraints accumulate}.
\label{eq:stab-dim-shrink}
\end{equation}
This local reduction provides a direct measure of how strongly the observation
statistics restrict infinitesimal right-unitary perturbations of the mixing
matrix.

The tangent-space viewpoint naturally bridges global symmetry reduction and
local sensitivity analysis.
In particular, it motivates a Jacobian-based characterization of local
identifiability, in which the sensitivity of the observation map to
infinitesimal reparameterizations determines whether residual ambiguities
persist locally.
This perspective forms the basis of the Jacobian analysis developed in
Section~\ref{sec:local-identifiability}.

Overall, this structural viewpoint clarifies when and how identifiability
improves as additional observation statistics are incorporated.
By framing ambiguity reduction in terms of stabilizer shrinkage and its
manifestation in the tangent space, this section provides a coherent transition
from global ambiguity analysis to the local identifiability results presented
next.

\section{Local Identifiability via Jacobian Sensitivity}
\label{sec:local-identifiability}

\subsection{Local Perturbations and Constraint Sensitivity}
\label{subsec:local-perturbations}

Identifiability in blind source separation is often discussed as a global or
asymptotic property.
In practical settings with finite samples and imperfect statistical estimates,
however, identifiability manifests through the local behavior of the model.
It is therefore informative to examine how small perturbations of the mixing
model affect the observation statistics enforced by a given set of constraints.

Let $\mathrm{Stab}(\mathcal I)$ denote the stabilizer associated with a set of
observation constraints, as defined in
Section~\ref{sec:structural-viewpoint}.
Since $\mathrm{GL}(n)$ is a Lie group, admissible transformations in a
neighborhood of the identity can be expressed through infinitesimal
perturbations.
After whitening, these reduce locally to right-orthogonal directions, motivating
the use of skew-symmetric generators.

Specifically, we consider local perturbations of the form
\begin{equation}
H(\varepsilon)
=
H \exp(\varepsilon \Omega),
\qquad
\Omega^\top = -\Omega,
\label{eq:local-perturbation}
\end{equation}
where $\varepsilon \in \mathbb{R}$ is a small scalar parameter and
$\Omega$ is a skew-symmetric matrix generating a local right action.
This parameterization preserves orthogonality after whitening and provides a
natural coordinate system for local analysis.

Under such perturbations, some directions $\Omega$ induce noticeable changes in
the observation statistics, while others leave them nearly unchanged.
The former correspond to directions strongly constrained by the imposed
statistics, whereas the latter indicate residual ambiguities that remain weakly
constrained.
Sensitivity to local perturbations therefore provides a natural lens through
which the practical strength of observation-induced constraints can be assessed.

%------------------------------------
\subsection{Jacobian as a Numerical Probe of Residual Ambiguity}
\label{subsec:jacobian-probe}

The local sensitivity of observation-induced constraints to infinitesimal
perturbations can be examined through the Jacobian of the observation map
$\Phi$ defined in \eqref{eq:obs-map}.
The Jacobian with respect to the local perturbation
\eqref{eq:local-perturbation} is defined as
\begin{equation}
J(H)
=
\left.
\frac{\partial}{\partial \varepsilon}
\Phi\!\left(H \exp(\varepsilon \Omega)\right)
\right|_{\varepsilon=0}.
\label{eq:jacobian-def}
\end{equation}

This Jacobian captures the first-order response of the observation constraints
to infinitesimal transformations generated by $\Omega$.
Locally, stabilizer shrinkage corresponds to a reduction of admissible tangent
directions, and the Jacobian therefore indicates which perturbation directions
remain compatible with the imposed constraints.

In practice, the Jacobian is used as a numerical probe of local constraint
strength.
Small Jacobian responses correspond to transformations that leave the statistics
nearly invariant, indicating weakly constrained directions associated with
residual ambiguities.
Conversely, large responses indicate directions strongly restricted by the
observation constraints.

For completeness, the formal definition of the Jacobian-based probe and its
local parameterization are summarized in Appendix~\ref{app:jacobian-probe}.

%------------------------------------
\subsection{Relation to Structural Ambiguity Reduction}
\label{subsec:structural-relation}

The numerical behavior revealed by Jacobian sensitivity corresponds directly to
the structural ambiguity reduction discussed in
Section~\ref{sec:structural-viewpoint}.
Observation-induced constraints restrict admissible transformations at the group
level through stabilizer shrinkage, and this restriction appears locally through
the behavior of the Jacobian.

Perturbation directions that remain weakly constrained by the imposed statistics
correspond to tangent directions of the stabilizer
$\mathrm{Stab}(\mathcal I)$ at the identity.
Here, $T_I\mathrm{Stab}(\mathcal I)$ denotes the tangent space of the stabilizer
$\mathrm{Stab}(\mathcal I)$ at the identity element, representing the space of
infinitesimal admissible transformations.
Such relationships between infinitesimal invariances and the tangent space of a
symmetry group are standard in Lie theory (cf., e.g.,~\cite{Hall2015LieGroups}).
When $\mathrm{Stab}(\mathcal I)$ forms a regular Lie subgroup of
$\mathrm{GL}(n)$, its tangent space at the identity is naturally identified with
the Lie algebra associated with the stabilizer, as expressed in
\eqref{eq:local_global_correspondence}.

At the level of first-order sensitivity, this relation can be summarized by
\begin{equation}
\ker J(H)
\;\subseteq\;
T_I \mathrm{Stab}(\mathcal I),
\label{eq:kernel_tangent}
\end{equation}
indicating that locally unobservable perturbations align with infinitesimal
stabilizer directions.
Under the same regularity conditions, this inclusion tightens to
$\ker J(H)=T_I\mathrm{Stab}(\mathcal{I})$.

As additional constraints are accumulated, the stabilizer shrinks and its local
tangent space decreases accordingly.
Numerically, this reduction manifests as suppression of weakly sensitive
directions in the Jacobian, reflected by the collapse of $\ker J(H)$ and the
growth of its smallest singular value.
\section{Unifying Non-Gaussianity and Observation Diversity}
\label{sec:tradeoff}

\subsection{Motivation}
\label{subsec:motivation}

Classical blind source separation methods based on higher-order statistics (HOS)
and those based on second-order statistics (SOS) are often treated as
fundamentally different.
However, from the structural viewpoint developed in
Sections~\ref{sec:structural-viewpoint}--\ref{sec:local-identifiability},
both classes of methods reduce the same object: the residual ambiguity
captured by the stabilizer of the observation constraints.

Let $\mathcal I$ denote the index set of constraint operators associated with a
given observation mechanism, and recall the definition of the stabilizer
$\mathrm{Stab}(\mathcal I)$ given in~\eqref{eq:global-ambiguity}.
Identifiability improves as this stabilizer shrinks under increasingly rich
constraint families, regardless of how the constraints are instantiated.

From this perspective, the apparent dichotomy between HOS-based and SOS-based
methods reflects different choices of the constraint index set $\mathcal I$
rather than fundamentally distinct identifiability mechanisms.
HOS-based methods correspond to constraint families
$\mathcal I_{\mathrm{HOS}}$ induced by higher-order statistics,
whereas SOS-based methods correspond to constraint families
$\mathcal I_{\mathrm{SOS}}$ induced by observation diversity.
A unified treatment therefore requires comparing how these different index sets
reduce the same stabilizer, rather than comparing algorithms in isolation.

%----------------------------------
\subsection{Numerical Identifiability via Local Sensitivity}
\label{subsec:numerical-identifiability}

While the stabilizer $\mathrm{Stab}(\mathcal I)$ provides a global
characterization of residual ambiguity, its direct computation or comparison is
generally intractable.
As discussed in Section~\ref{sec:local-identifiability}, a local characterization
is obtained by examining the Jacobian of the observation map.

Let $J(H)$ denote the Jacobian defined in~\eqref{eq:jacobian-def}.
The kernel $\ker J(H)$ characterizes infinitesimal perturbations that remain
compatible with the imposed constraints and therefore provides a local
representation of stabilizer shrinkage.
In particular, under suitable regularity conditions,
\begin{equation}
\ker J(H)
\;\cong\;
\mathfrak{lie}\!\left(\mathrm{Stab}(\mathcal I)\right),
\label{eq:local_global_correspondence}
\end{equation}
which links global symmetry reduction to local sensitivity.
Here, $\mathfrak{lie}\!\left(\mathrm{Stab}(\mathcal I)\right)$ denotes the Lie
algebra associated with the stabilizer, that is, the space of infinitesimal
generators of transformations that preserve the imposed observation constraints.

In practice, rather than using $\ker J(H)$ as a binary identifiability
criterion, we summarize local constraint strength through the smallest singular
value $\sigma_{\min}(J(H))$, which measures how strongly stabilizer directions
are suppressed.
This quantity serves as a numerical probe of stabilizer shrinkage rather than
as an estimator or optimality metric.

%----------------------------------
\subsection{Instantiating Stabilizer Shrinkage with HOS and SOS}

The unifying interpretation becomes explicit when HOS-based and SOS-based
constraints are examined within the same stabilizer framework.
For HOS-based methods, the constraint family $\mathcal I_{\mathrm{HOS}}$ is
determined by higher-order cumulant operators, whereas for SOS-based methods,
the constraint family $\mathcal I_{\mathrm{SOS}}$ is determined by collections
of second-order covariance operators.

Although $\mathcal I_{\mathrm{HOS}}$ and $\mathcal I_{\mathrm{SOS}}$ differ in
their statistical origins, both act on the same object through stabilizer
intersection:
\begin{equation}
\mathrm{Stab}(\mathcal I_{\mathrm{HOS}} \cup \mathcal I_{\mathrm{SOS}})
=
\mathrm{Stab}(\mathcal I_{\mathrm{HOS}})
\cap
\mathrm{Stab}(\mathcal I_{\mathrm{SOS}}),
\label{eq:hos_sos_intersection}
\end{equation}
as established in~\eqref{eq:stabilizer-intersection}.
Consequently, identifiability improvements obtained by increasing
non-Gaussianity or observation diversity correspond to different ways of
shrinking the same stabilizer.

This formulation clarifies why non-Gaussianity and observation diversity can be
viewed as alternative sources of constraint richness.
Different constraint families reduce the same stabilizer through different
mechanisms, while their effect on local identifiability is captured by the same
Jacobian-based sensitivity.
From a systems perspective, architectural choices (e.g., the number of lags or
sensing diversity) and statistical assumptions influence identifiability only
through the extent to which they reduce the dimension of the stabilizer,
providing a structural guideline for the design of observation mechanisms in
blind source separation.
\section{Numerical Experiments}
\label{sec:experiments}

Throughout this section, we use the Jacobian-based probe
$\sigma_{\min}(J)$ as a diagnostic tool to visualize
\emph{structural ambiguity reduction} under different observation constraints.
The absolute scale of the probe depends on the specific constraint family and is
not intended for cross-experiment comparison or for ranking HOS- and SOS-based
approaches.
All experiments are designed to validate structural predictions derived from
stabilizer shrinkage, rather than to benchmark separation algorithms.

%-----------------
\subsection{Experimental Setup and Evaluation Measures}
\label{subsec:exp-setup}

We evaluate the structural effects discussed in
Sections~\ref{sec:structural-viewpoint}--\ref{sec:local-identifiability}
using controlled numerical experiments.
The purpose of this section is \emph{not} algorithm benchmarking or performance
comparison, but to visualize how ambiguity reduction emerges through the
accumulation of observation-induced constraints.

\paragraph{Model and preprocessing.}
We consider a real-valued, linear instantaneous BSS model $y = Hs$ with an equal
number of sources and observations ($n=m$).
Unless otherwise stated, we set $n=m=3$ and whiten all mixtures using the sample
covariance.
After whitening, we select a reference mixing matrix $H_0=I$ as a canonical
representative of the equivalence class discussed in
Section~\ref{subsec:bss-model}, and evaluate identifiability locally around this
reference point.
Accordingly, the remaining ambiguity corresponds to right-orthogonal
transformations.

\paragraph{Primary probe.}
Let $\Phi(H)$ denote the collection of observation statistics introduced in
Section~\ref{subsec:obs-constraints}, evaluated under a given experimental
configuration.
Local sensitivity is probed using the right-orthogonal perturbation
parameterization introduced in Section~\ref{subsec:local-perturbations}
(cf.~\eqref{eq:local-perturbation}), together with the Jacobian defined in
Section~\ref{subsec:jacobian-probe} (cf.~\eqref{eq:jacobian-def}).
In the present experiments, the Jacobian is evaluated at the reference point
$H_0=I$, and local sensitivity is assessed via infinitesimal perturbations around
this point.

Our scalar diagnostic is defined as
\begin{equation}
\mathrm{probe}(H_0)
:= \sigma_{\min}\!\bigl(J(H_0)\bigr),
\label{eq:exp_probe}
\end{equation}
which summarizes how strongly residual stabilizer directions are suppressed
locally.
All quantities are estimated from finite samples; we report Monte Carlo means
and dispersion of $\mathrm{probe}(H_0)$.

\paragraph{Secondary index (sanity check only).}
As a secondary reference, we optionally report the Amari performance index (API)
as a sanity check~\cite{amari_new_1995}.
The API is not used to select algorithms or tune parameters; structural effects
are assessed solely through the probe.

\paragraph{Implementation.}
All experiments were implemented in Python using standard numerical libraries
(NumPy/SciPy) and executed in a CPU-based environment.
No GPU-specific operations or platform-dependent optimizations were used.

%-----------------
\subsection{Effect of Higher-Order Statistics (HOS)}
\label{subsec:exp-hos}

Table~\ref{tab:hos} summarizes the simulation settings for the HOS-based
experiments, including the choice of the reference mixing matrix $H_0$ used for
local sensitivity evaluation.
These settings are chosen to isolate the effect of source non-Gaussianity on
structural identifiability under higher-order constraints.

\begin{table}[t]
\centering
\caption{Simulation settings for the HOS experiment.}
\begin{tabular}{l c}
\hline
Number of sources/observations $n=m$ & $3$ \\
Sample length $T$ & $1\times 10^{5}$ \\
Monte Carlo trials & $50$ \\
Whitening & Yes \\
Reference mixing matrix & $H_0 = I$ \\
Source distribution & Generalized Gaussian \\
Shape parameters $p$ & $\{0.8,1.0,1.5,2.0,3.0\}$ \\
HOS constraints & Fourth-order cumulant matrices \\
\hline
\end{tabular}
\label{tab:hos}
\end{table}

\begin{figure}[t]
\centering
\includegraphics[width=\linewidth]{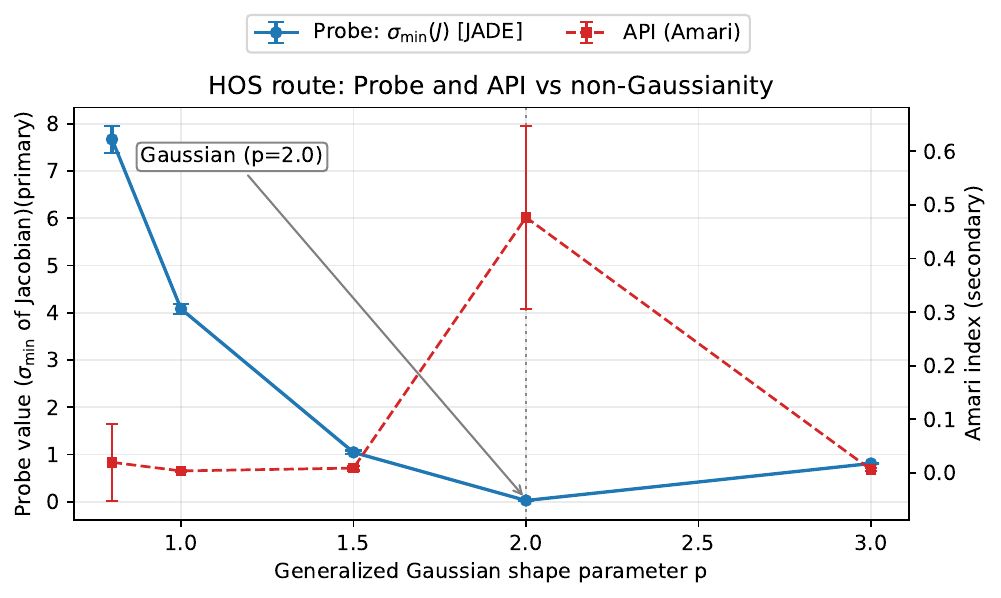}
\caption{%
Effect of source non-Gaussianity on structural identifiability in the HOS route.
Primary axis: Jacobian-based probe $\sigma_{\min}(J)$ computed from fourth-order
cumulant constraints.
The probe collapses near the Gaussian boundary ($p=2.0$) and increases for
non-Gaussian sources.
Secondary axis: Amari performance index (API).
Error bars indicate Monte Carlo dispersion.
\emph{Note that the absolute scales of both the primary probe and the secondary
separation index are experiment-dependent and should not be compared across
different figures or constraint families.}}
\label{fig:hos}
\end{figure}

Sources are i.i.d.\ generalized Gaussian, where the shape parameter $p$ controls
deviation from Gaussianity, with the Gaussian boundary at $p=2.0$.
Higher-order information is incorporated through symmetric matrices derived from
fourth-order cumulants, constructed using the standard JADE procedure.

The resulting behavior of the Jacobian-based probe evaluated at $H_0$ is shown
in Fig.~\ref{fig:hos}.
As predicted by the structural analysis, the probe collapses near the Gaussian
boundary ($p=2.0$), consistent with the vanishing of fourth-order cumulants and
the persistence of stabilizer directions under Gaussian sources.
As $p$ departs from $2.0$, the probe increases, reflecting progressive activation
of higher-order constraints and corresponding stabilizer shrinkage.
The API is shown only as a sanity check and confirms degradation of separation
quality in the Gaussian case.

%------------------
\subsection{Effect of Observation Diversity via Multi-Lag SOS}
\label{subsec:exp-sos}

We next examine ambiguity reduction driven by observation diversity using
multi-lag second-order statistics, without relying on higher-order information.
The corresponding simulation settings are summarized in
Table~\ref{tab:sos}.

\begin{table}[t]
\centering
\caption{Simulation settings for the SOS experiment.}
\begin{tabular}{l c}
\hline
Number of sources/observations $n=m$ & $3$ \\
Sample length $T$ & $1\times 10^{5}$ \\
Monte Carlo trials & $50$ \\
Whitening & Yes \\
Reference mixing matrix & $H_0 = I$ \\
Source model & Gaussian AR(1) \\
AR coefficients $(a_1,a_2,a_3)$ & $(0.2,\,0.6,\,0.9)$ \\
Lag set $\mathcal{T}_L$ & $\{1,2,\dots,L\}$ \\
Lag count $L$ & $\{1,2,3,4,5,6,7\}$ \\
SOS constraints & Lagged covariance matrices \\
\hline
\end{tabular}
\label{tab:sos}
\end{table}

\begin{figure}[t]
\centering
\includegraphics[width=\linewidth]{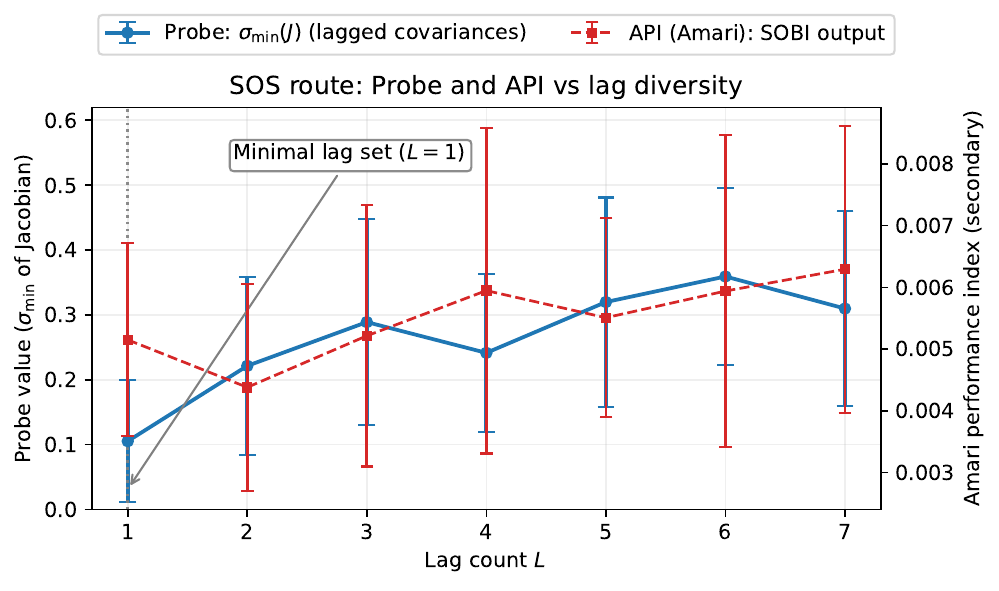}
\caption{Effect of observation diversity on structural identifiability in the
SOS route.
Primary axis: Jacobian-based probe $\sigma_{\min}(J)$ for lag counts
$L=1,2,\dots,7$.
Secondary axis: Amari performance index (API).
Increasing lag diversity strengthens the probe, while gains saturate for larger
$L$.
\emph{Note that the absolute scales of both the primary probe and the secondary
separation index are experiment-dependent and should not be compared across
different figures or constraint families.}
}
\label{fig:sos}
\end{figure}

Sources are Gaussian AR(1) processes
\[
s_i(t)=a_i s_i(t-1)+u_i(t), \qquad u_i(t)\sim\mathcal{N}(0,1),
\]
with distinct coefficients $(a_1,a_2,a_3)=(0.2,0.6,0.9)$.
Lagged covariance matrices of the whitened observations,
\[
R_z(\tau)=\mathbb{E}[z(t)z(t-\tau)^\top], \qquad \tau\in\mathcal{T}_L,
\]
constitute the SOS constraints.

Figure~\ref{fig:sos} reports the resulting probe values evaluated at the
reference mixing matrix $H_0$ as a function of the number of lags $L$.
For the minimal lag set ($L=1$), the probe remains small, indicating weak
constraint strength.
As additional lags are included ($L>1$), the probe increases overall, reflecting
progressive suppression of residual stabilizer directions through accumulated
temporal constraints.
For larger $L$, gains diminish due to finite-sample effects.
The API remains nearly constant, indicating that separation quality is already
saturated, while structural ambiguity continues to shrink.

%------------------
\subsection{Trade-off Between Source Statistics and Observation Diversity}
\label{subsec:exp-tradeoff}

Finally, we visualize the trade-off between source statistics and observation
diversity using second-order observation constraints, without explicitly
combining higher-order statistics.
Here, the shape parameter $p$ characterizes the source process only and does not
enter the SOS constraint operators themselves; the probe
$\mathrm{probe}_{\mathrm{SOS}}$ is computed exclusively from second-order
(lagged covariance) operators, while varying $p$ modifies the effective
second-order structure available after whitening.
For each $(p,L)$, we evaluate the Jacobian-based probe defined
in~\eqref{eq:exp_probe} for the SOS constraint family:
\begin{equation}
\mathrm{probe}_{\mathrm{SOS}}(p,L)
\;:=\;
\sigma_{\min}\!\bigl(J_{\mathrm{SOS}}(H_0)\bigr).
\label{eq:probe_sos}
\end{equation}
Based on this quantity, we define the minimal lag count required to reach a
target sensitivity level $\varepsilon$ as
\begin{equation}
L_{\min}(p;\varepsilon)
\;:=\;
\min\{\,L:\ \mathrm{probe}_{\mathrm{SOS}}(p,L)\ge \varepsilon\,\}.
\label{eq:Lmin}
\end{equation}

The resulting trade-off is summarized in Fig.~\ref{fig:tradeoff_sos}.
Near the Gaussian boundary, where statistical structure is weak, larger lag
diversity is required to reach the same probe level.
Away from this boundary, fewer lags suffice.
Different pairs $(p,L)$ that yield the same probe value therefore correspond to
structurally equivalent configurations in terms of local identifiability,
illustrating how non-Gaussianity and observation diversity contribute
comparably to stabilizer shrinkage.

\begin{figure}[t]
\centering
\includegraphics[width=\linewidth]{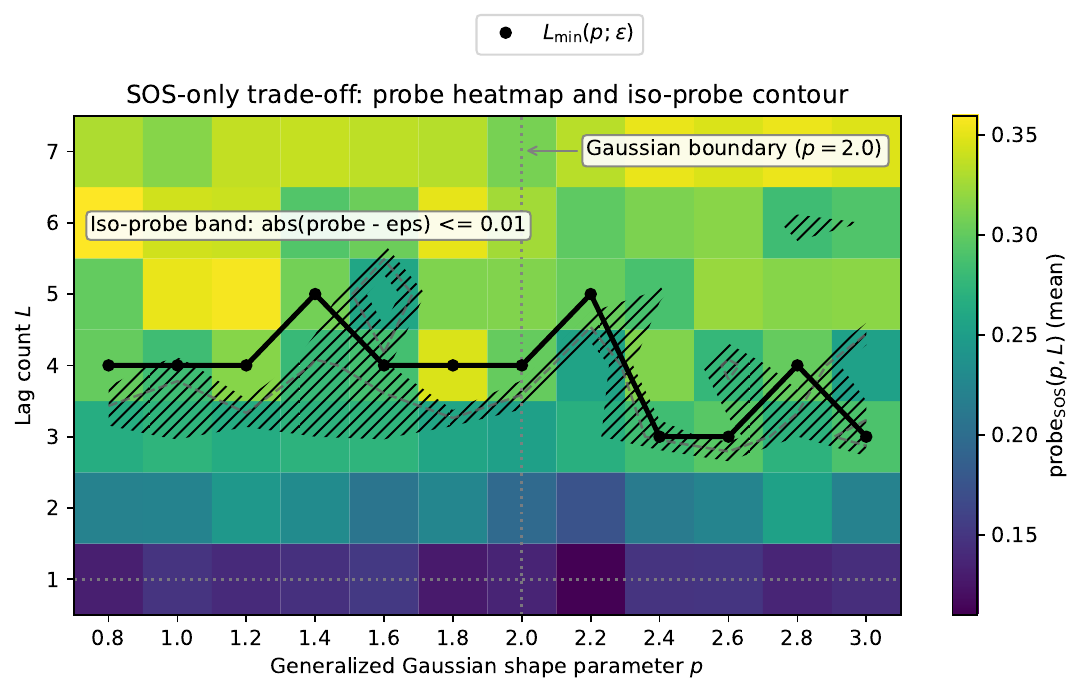}
\caption{Trade-off between source statistics and observation diversity under
second-order observation constraints.
Heatmap: $\mathrm{probe}_{\mathrm{SOS}}(p,L)$.
Markers: $L_{\min}(p;\varepsilon)$ defined in~\eqref{eq:Lmin}.
This visualization highlights structurally equivalent configurations rather
than performance optima.}
\label{fig:tradeoff_sos}
\end{figure}
\section{Conclusion}
\label{sec:conclusion}

This paper examined identifiability in blind source separation (BSS) from a
structural standpoint, focusing on how observation-induced constraints eliminate
symmetry-induced ambiguities.
HOS-based and SOS-based approaches were characterized as mechanisms that restrict
the class of transformations consistent with the observed data.
Within this formulation, identifiability is governed by the extent to which
statistical constraints shrink residual ambiguities, and the classical
distinction between HOS and SOS is attributed to the origin of the imposed
constraints.

To connect this structural characterization with finite-sample behavior, we
introduced a Jacobian-based sensitivity probe as a numerical diagnostic of local
identifiability.
Numerical experiments confirmed the structural interpretation: in the HOS route,
the probe collapses near the Gaussian limit and increases with stronger
non-Gaussian constraints; in the SOS route, increasing lag diversity strengthens
the probe even under Gaussian sources; and their trade-off can be visualized
within a common structural scale.
These results provide a local characterization of residual ambiguities through
the kernel and sensitivity of the Jacobian.

Beyond individual mechanisms, the main contribution of this work is a
constraint-centered reorganization of identifiability.
By separating structural effects from algorithmic implementations, the proposed
view clarifies why methods with different formulations often exhibit similar
empirical behavior and why additional constraints can improve robustness even
when conventional performance indices saturate.
From a circuits-and-systems perspective, observation architecture choices
(e.g., lag diversity) and statistical assumptions influence identifiability
through their effect on stabilizer reduction under finite-sample and
computational constraints.
In this sense, identifiability can be viewed as the collapse of an infinite
equivalence class to a finite residual ambiguity under observation-induced
constraints (cf.~Sections~\ref{subsec:obs-constraints} and
\ref{subsec:constraint-accumulation}).

Several extensions naturally follow.
The same structural analysis applies to complex-valued formulations and to
broader observation models, including convolutive and block or oversampled
settings, as well as to multi-way representations arising in tensor-based
formulations~\cite{le_tensor_2024}.
The Jacobian-based probe also supports a design-oriented evaluation of constraint
configurations based on their ambiguity-suppressing effect.
More generally, the proposed interpretation provides a basis for extending
constraint-induced ambiguity reduction to other blind inference problems beyond
classical BSS.

Overall, by organizing identifiability around constraint-induced ambiguity
reduction, this work clarifies the operating principles of classical BSS methods
and provides a concise structural basis for the design and evaluation of future
blind separation approaches.

% ===== Appendix =====
\appendices
\appendices
\section{Formal Definition of the Jacobian-Based Probe}
\label{app:jacobian-probe}

\subsection{Observation Relations and Parameterization}
\label{app:observation-relations}

This appendix formalizes the Jacobian-based probe used in
Sections~\ref{sec:local-identifiability} and~\ref{sec:experiments}.
All models, assumptions, and notation are identical to those introduced in
Section~\ref{sec:problem-formulation}, and the local perturbation model follows
Section~\ref{sec:local-identifiability}.
The purpose here is purely notational: to state the probe definition used in the
numerical evaluations, without introducing additional theoretical claims.

Let $\Phi(H)$ denote the observation map defined in~\eqref{eq:obs-map}, whose
components depend on the chosen constraint family (e.g., HOS or multi--SOS).
Throughout, local sensitivity is evaluated under right-orthogonal
reparameterizations as in~\eqref{eq:local-perturbation}.

\subsection{Jacobian of Observation Relations}
\label{app:jacobian-definition}

Consider the local perturbation $H(\theta)=H\exp(\theta\Omega)$ with
$\Omega^\top=-\Omega$ as in~\eqref{eq:local-perturbation}.
The Jacobian of the observation map along this perturbation is defined by
\begin{equation}
J(H)
=
\left.
\frac{\partial}{\partial \theta}\,
\Phi\!\left(H \exp(\theta \Omega)\right)
\right|_{\theta = 0},
\end{equation}
which coincides with the definition used in~\eqref{eq:jacobian-def}.
The Jacobian provides a first-order linearization of how the enforced observation
statistics respond to infinitesimal right-orthogonal perturbations.

It is emphasized that $J(H)$ is used only as a local sensitivity probe.
It is not associated with an estimator, likelihood function, or statistical
optimality criterion, and is introduced solely to quantify how strongly the
imposed observation constraints respond to parameter perturbations.

\paragraph*{Concrete instantiations of $\Phi(H)$.}
The abstract observation map $\Phi(H)$ specializes to concrete forms depending
on the constraint family.
For HOS-based constraints,
\begin{equation}
\Phi_{\mathrm{HOS}}(H)
=
\bigl\{ C^{(4)}_k(H) \bigr\}_{k=1}^{K},
\end{equation}
where $C^{(4)}_k(H)$ denote symmetric matrices constructed from fourth-order
cumulants (e.g., as in the JADE construction).
For SOS-based constraints with lag set $\mathcal T_L=\{1,\dots,L\}$,
\begin{equation}
\Phi_{\mathrm{SOS}}(H)
=
\bigl\{ R_z(\tau) \mid \tau \in \mathcal T_L \bigr\},
\qquad
R_z(\tau)=\mathbb{E}[z(t)z(t-\tau)^\top].
\end{equation}
In both cases, we write
$\Phi(H)=\{A_i(H)\}_{i\in\mathcal I}$
where $A_i(H)$ denotes the $i$-th matrix-valued component of the observation
map (e.g., $A_i(H)=C^{(4)}_k(H)$ for HOS-based constraints or
$A_i(H)=R_z(\tau)$ for SOS-based constraints).
For local sensitivity analysis, we use the stacked vector representation
$\phi(H)$ obtained by vectorizing and concatenating these components:
\begin{equation}
\phi(H)
=
\begin{bmatrix}
\mathrm{vec}\,A_1(H)\\
\vdots\\
\mathrm{vec}\,A_{|\mathcal I|}(H)
\end{bmatrix}
\in\mathbb{R}^{|\mathcal I|\,n^2},
\end{equation}
where $\mathrm{vec}(\cdot)$ denotes the standard column-wise vectorization
of a matrix.
The Jacobian $J(H)\in\mathbb{R}^{(|\mathcal I|\,n^2)\times d}$ with
$d=n(n-1)/2$ is then obtained by collecting directional derivatives of
$\phi(H)$ with respect to infinitesimal right-orthogonal perturbations of $H$,
parameterized by a basis of skew-symmetric generators $\Omega$.

\subsection{Definition and Interpretation of the Probe}
\label{app:probe-definition}

The Jacobian-based probe is defined as
\begin{equation}
\mathrm{probe}(H) := \sigma_{\min}\!\left( J(H) \right),
\end{equation}
where $\sigma_{\min}(\cdot)$ denotes the smallest singular value.
Small values of $\mathrm{probe}(H)$ indicate the presence of perturbation
directions that leave the enforced statistics nearly unchanged, whereas larger
values indicate stronger suppression of such weakly observable directions.
This interpretation is consistent with the structural viewpoint in
Section~\ref{sec:structural-viewpoint} and the local analysis in
Section~\ref{sec:local-identifiability}.

Importantly, the probe does not define identifiability in a strict sense, nor
does it characterize the full symmetry structure of the model.
It is used in Section~\ref{sec:experiments} only as a diagnostic tool to
visualize and compare the effects of different constraint configurations, without
introducing new identifiability criteria.

\subsection{Practical Computation from Sample Statistics}
\label{app:practical-computation}

In experiments, the statistics entering $\Phi(H)$ are estimated from finite data
samples.
Accordingly, $J(H)$ and $\mathrm{probe}(H)$ are computed using sample-based
estimates, and the reported values reflect both the imposed structural constraints
and estimation variability.
All numerical results in Section~\ref{sec:experiments} are therefore averaged over
Monte Carlo trials and accompanied by dispersion measures.

%-----------------------
\section{Situations in Which Identifiability Constraints Do Not Shrink the Stabilizer}
\label{app:asymmetry}
%-----------------------

This appendix provides explicit examples that support the asymmetry stated in
Section~\ref{subsec:obs-constraints} (Remark on identifiability):
``richness'' alone does not shrink the stabilizer unless constraints act at the
appropriate level.
All notation and assumptions follow Section~\ref{sec:problem-formulation} and
Section~\ref{subsec:obs-constraints}, and we focus only on the transformations
that preserve the available observations.

%-----------------------
\subsection{Output Non-Gaussianity Alone Is Insufficient}
\label{app:output_nongaussian}

Assume that only the output distribution $P_y$ is observed, while no constraint
is imposed on the source distribution $P_s$.
Then the right-unitary reparameterization
\begin{equation}
H \mapsto H U, \qquad U \in \mathcal{U}(n),
\label{eq:right_unitary_action}
\end{equation}
cannot be excluded from $P_y$ alone.
Indeed, defining $s' = U^{\ast}s$ yields the equivalent factorization
\begin{equation}
y = Hs = (HU)s',
\label{eq:equivalent_factorization}
\end{equation}
so the same $P_y$ is compatible with an entire right-unitary orbit of mixing
matrices.
Hence, output non-Gaussianity by itself does not restrict the stabilizer of the
mixing model.

% NOTE: \label{eq:bss_model} and \label{eq:source_rotation} are kept as-is to
% avoid changing labels, although they are not essential in the shortened text.

%-----------------------
\subsection{Input-Side Structural Diversity Without Source Constraints Is Also Insufficient}
\label{app:input_structural}

Assume that the source process is Gaussian but exhibits rich second-order
structure (e.g., nontrivial multi-lag covariances, oversampled/block covariances,
or cyclostationary second-order statistics).
Without additional source-side constraints that fix a preferred representation,
the same right-unitary reparameterization persists:
\begin{equation}
s'(t) = U^{\ast}s(t), \qquad H' = H U, \qquad U \in \mathcal{U}(n).
\label{eq:unitary_reparam}
\end{equation}
Gaussianity is preserved under unitary transformations, and second-order
structure transforms by unitary congruence; for example,
\begin{equation}
\Sigma_{s'}(\tau)
:= \mathbb{E}\!\left[s'(t) s^{\ast}(t-\tau)\right]
= U^{\ast}\Sigma_s(\tau)U .
\label{eq:unitary_covariance_congruence}
\end{equation}
Thus, input-side second-order ``richness'' alone does not shrink the stabilizer
unless additional constraints anchor the source representation.

%-----------------------
\subsection{Failure of Single-Lag SOS}
\label{app:single_lag_counterexample}

This subsection presents a minimal counterexample that isolates the limitation
of a single \emph{output-side} second-order constraint and clarifies why
observation diversity is essential for stabilizer shrinkage.
All notation and assumptions follow
Sections~\ref{sec:problem-formulation}--\ref{subsec:obs-constraints}, and we focus
exclusively on the effect of lagged covariance operators.

Accordingly, when only the single-lag covariance operator is enforced, the
stabilizer remains the full unitary group,
\begin{equation}
    \mathrm{Stab}(H;\mathrm{Cov}(0)) = \mathcal{U}(n),
\end{equation}
and the corresponding Jacobian kernel is nontrivial,
\begin{equation}
    \ker J(H) \neq \{0\},
\end{equation}
where $J(H)$ denotes the Jacobian of the observation map defined in
\eqref{eq:jacobian-def}.
This demonstrates that a single second-order constraint does not reduce the
continuous right-unitary ambiguity: infinitesimal right-unitary perturbations
remain locally invisible to the enforced statistics.

Now consider the addition of a second lag $\tau \neq 0$, yielding the pair of
operators $\{\Sigma_y(0), \Sigma_y(\tau)\}$.
Under generic conditions---specifically, when the joint spectrum of the associated
source covariance operators is simple---the joint commutant of these operators is
finite.
In this case, the stabilizer collapses to a finite set, and the local ambiguity
vanishes:
\begin{equation}
    \ker J(H) = \{0\}.
\end{equation}
Thus, the inclusion of a second lag is sufficient to eliminate the continuous
right-unitary freedom that persists under a single-lag constraint.

This example highlights the minimal nature of stabilizer shrinkage under
\emph{output-side} second-order statistics: observation diversity must be
sufficient to break the relevant symmetry.
Single-lag SOS fails to do so, while multi-lag SOS succeeds precisely by inducing
a finite stabilizer through joint constraints.

%-----------------------
\subsection{Interpretation}
\label{app:interpretation}

The above examples support the central message of the paper:
identifiability in BSS emerges only when constraints restrict the admissible
transformations of the mixing model at the appropriate level.
In particular, stabilizer shrinkage requires either
(i) source-side constraints that fix a preferred representation (e.g.,
non-Gaussianity combined with independence), or
(ii) output-side operator diversity that effectively constrains the right-unitary
action (e.g., multi-lag/oversampled covariance operators whose joint commutant is
finite).
By contrast, statistical or structural ``richness'' by itself does not shrink the
stabilizer unless anchored by such constraints.

\subsection{HOS Trade-off Visualization}
\label{subsec:arxiv-hos-tradeoff}

For completeness, we provide an additional trade-off visualization for the HOS route.
We construct a family of fourth-order cumulant matrices $\{C_i\}$ (JADE-style) from whitened mixtures and evaluate the Jacobian-based probe at $H=I$.
Let $K$ denote the number of cumulant matrices used as constraints (sorted by Frobenius norm), and define
\[
\mathrm{probe}_{\mathrm{HOS}}(p,K)=\sigma_{\min}\!\bigl(J_{\mathrm{HOS}}\bigr).
\]
Figure~\ref{fig:hos_tradeoff_arxiv} shows a heatmap of $\log_{10}\mathrm{probe}_{\mathrm{HOS}}(p,K)$, together with the design-oriented summary
\[
K_{\min}(p;\varepsilon):=\min\{\,K:\ \mathrm{probe}_{\mathrm{HOS}}(p,K)\ge \varepsilon\,\},
\]
where $\varepsilon=0.5$.

As expected, the HOS probe collapses rapidly near the Gaussian boundary ($p=2$) across all $K$, reflecting the disappearance of fourth-order information for Gaussian sources.
Away from the boundary, increasing $K$ strengthens the probe, and only a small number of HOS constraints may suffice to exceed the target level.
The hatched region indicates an iso-probe band
\[
|\mathrm{probe}_{\mathrm{HOS}}(p,K)-\varepsilon|\le \delta
\]
(with $\delta=0.05$), serving as a reading aid rather than a sharp boundary.

\begin{figure}[t]
\centering
\includegraphics[width=\linewidth]{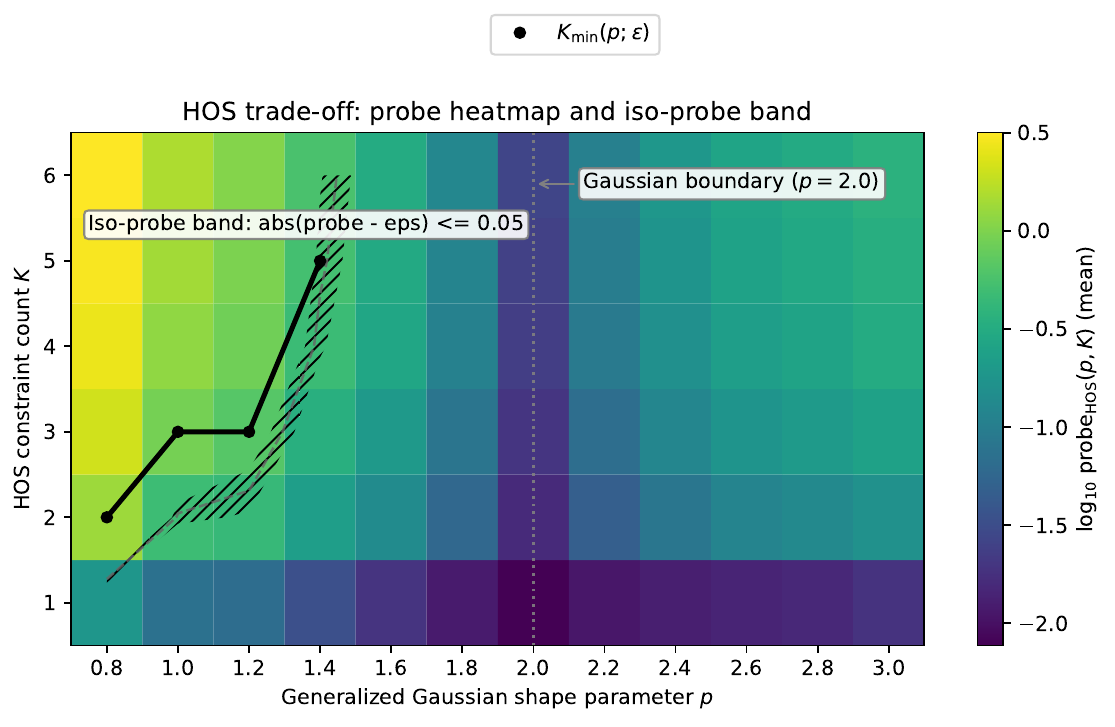}
\caption{HOS trade-off visualization.
Heatmap: $\log_{10}\mathrm{probe}_{\mathrm{HOS}}(p,K)$ (Monte Carlo mean) as a function of generalized Gaussian shape parameter $p$ and the number $K$ of cumulant-matrix constraints (JADE-style, sorted by Frobenius norm).
Black markers and solid line: $K_{\min}(p;\varepsilon)$ with $\varepsilon=0.5$.
Hatched region: iso-probe band $|\mathrm{probe}_{\mathrm{HOS}}-\varepsilon|\le\delta$ with $\delta=0.05$ (reading aid; not a sharp boundary).
The vertical dotted line marks the Gaussian boundary ($p=2.0$).}
\label{fig:hos_tradeoff_arxiv}
\end{figure}

\section{Remarks on Classical Algorithms (arXiv version)}
\label{app:remarks-classical}

\subsection{AMUSE and FastICA}

Algorithms such as AMUSE and FastICA are often described as SOS- or
HOS-based methods. From the viewpoint of identifiability, AMUSE relies on a
single nonzero-lag covariance and therefore corresponds to the minimal SOS
configuration ($L=1$), while FastICA exploits higher-order information through
an implicit optimization criterion. In both cases, the set of identifiable
transformations is determined by the statistics being enforced; the algorithms
themselves primarily affect estimation accuracy under finite samples rather
than the underlying structural identifiability.

% ===== References =====
\bibliographystyle{IEEEtran}
%\bibliography{cbsp-1st_arXiv/CBSP-1st-reference}
\bibliography{CBSP-1st-reference}

% \newpage

% \section{Biography Section}
% If you have an EPS/PDF photo (graphicx package needed), extra braces are
%  needed around the contents of the optional argument to biography to prevent
%  the LaTeX parser from getting confused when it sees the complicated
%  $\backslash${\tt{includegraphics}} command within an optional argument. (You can create
%  your own custom macro containing the $\backslash${\tt{includegraphics}} command to make things
%  simpler here.)
 
% \vspace{11pt}

% \bf{If you include a photo:}\vspace{-33pt}
% \begin{IEEEbiography}[{\includegraphics[width=1in,height=1.25in,clip,keepaspectratio]{fig1}}]{Michael Shell}
% Use $\backslash${\tt{begin\{IEEEbiography\}}} and then for the 1st argument use $\backslash${\tt{includegraphics}} to declare and link the author photo.
% Use the author name as the 3rd argument followed by the biography text.
% \end{IEEEbiography}

% \vspace{11pt}

% \bf{If you will not include a photo:}\vspace{-33pt}
% \begin{IEEEbiographynophoto}{John Doe}
% Use $\backslash${\tt{begin\{IEEEbiographynophoto\}}} and the author name as the argument followed by the biography text.
% \end{IEEEbiographynophoto}

% \vfill

\end{document}